\newtheorem{thm}{Theorem}[section]
\newtheorem{lem}{Lemma}[section]
\newtheorem{defi}{Definition}
\newcommand{\be}{\begin{equation}}
\newcommand{\ee}{\end{equation}}
\newcommand{\fa}{\Psi_A}
\newcommand{\fad}{\Psi_A^\dagger}
\newcommand{\lam}{\lambda}
\newcommand{\ra}{\rangle}
\newcommand{\la}{\langle}
\begin{document}

\title{One-Dimensional Impenetrable Anyons in Thermal Equilibrium. I.\\
                   Anyonic Generalization of  Lenard's  Formula}

\author{Ovidiu I. P\^{a}\c{t}u }
   \affiliation{C.N. Yang Institute for Theoretical Physics, State University of
   New York at Stony Brook, Stony Brook, NY 11794-3840, USA  }
   \altaffiliation{Permanent address: Institute for Space Sciences MG 23, R 077125
   Bucharest-M\u{a}gurele, Romania} \email{ipatu@grad.physics.sunysb.edu}

\author{Vladimir E. Korepin}
    \affiliation {C.N. Yang Institute  for Theoretical Physics, State University
    of New York at Stony Brook, Stony Brook, NY 11794-3840, USA}
    \email{korepin@max2.physics.sunysb.edu}

\author{Dmitri V. Averin}
    \affiliation{Department of Physics and Astronomy, State University of New York
    at Stony Brook, Stony Brook, NY 11794-3800, USA}
    \email{dmitri.averin@stonybrook.edu}

\begin{abstract}

We have obtained an expansion of the reduced density matrices (or,
equivalently, correlation functions of the fields) of impenetrable
one-dimensional anyons in terms of the reduced density matrices of
fermions using the mapping between anyon and fermion wavefunctions.
This is the generalization to anyonic statistics of the result
obtained by A. Lenard \cite{L1} for bosons. In the case of
impenetrable but otherwise free anyons with statistics parameter
$\kappa$, the anyonic reduced density matrices in the grand
canonical ensemble is expressed as Fredholm minors of the integral
operator ($1-\gamma \hat \theta_T$) with complex
statistics-dependent coefficient $\gamma=(1+e^{\pm i\pi\kappa })/
\pi$. For $\kappa=0$ we recover the bosonic case of Lenard
$\gamma=2/\pi$. Due to nonconservation of parity, the anyonic field
correlators $\la \fad(x')\fa(x)\ra$ are different depending on the
sign of $x'-x$.

\end{abstract}

\maketitle

\section{Introduction}

During the last few years, several models of one-dimensional anyons
\cite{LMP,IT,Kundu,G2,AN} have attracted considerable attention.
This attention is motivated, besides general fundamental physics
interest of anyons, by several new statistical-mechanical features
of the anyonic systems. One of them is the behavior of the
zero-temperature field correlation functions, which exhibit
oscillations with the period dependent on the statistics parameter
in the leading term of their algebraically decaying large-distance
asymptotics. These oscillations were obtained within the harmonic
fluid approach \cite{CM} and conformal field theory \cite{PKA}.

Much of this effort was focused on the anyonic model proposed in
\cite{Kundu} and clarified in \cite{AN,PKA}. This model can be
understood as an anyonic extension of the Bose gas with
$\delta$-function interaction \cite{LL} which was solved by Lieb and
Liniger. The study of this model was initiated by Batchelor {\it et
al.} \cite{BGO,BG,BGH} and continued in \cite{SSC,CM,PKA}. In the
limit of infinite interaction strength, the anyons become
impenetrable. The present work is the first paper in a series that
intends to provide a complete treatment of the correlation functions
of such impenetrable free anyons in a manner similar to the
impenetrable Bose gas \cite{KBI}. The first step taken in this work
is the anyonic generalization of Lenard's formula \cite{L1} which
gives the reduced density matrices of hard-core bosons in terms of
the fermionic reduced density matrices. With this result, the
reduced density matrices of free impenetrable bosons can be
expressed in terms of Fredholm minors of the integral operator
$\hat\theta_T$ with the kernel given by the Fourier transform of the
Fermi distribution function.

In this work, we use the Anyon-Fermi \cite{G2,AN} and Anyon-Bose
\cite{Kundu} mapping  to generalize the treatment of Lenard to the
case of anyons. Our main result is  Theorem \ref{main} which in the
case of impenetrable free anyons leads to a representation of the
correlation functions of anyonic fields as Fredholm minors of the
operator $\hat\theta_T$. This representation of the correlation
functions can be summarized as follows. The Fermi distribution function
$\vartheta(k,T,h)$ at temperature $T$ and chemical potential $h$
defines the integral operator $\hat\theta_T$ with the kernel
\be \theta_T(x-y)=\frac{1}{2}\int_{-\infty}^\infty dk\
e^{ik(x-y)}\vartheta(k,T,h),\ \ \;\;\;
\vartheta(k,T,h)=\frac{1}{1+e^{(k^2-h)/T}}\, , \ee
which acts on an arbitrary function $f(x)$ as
\be \left(\hat\theta_T f\right)(x)=\int_{I}\theta_T(x-y)f(y)\ dy\, ,
\ee
where $I$ is an interval or a finite union of intervals on the real
axis. The resolvent kernel $\varrho_T(x,y)$ associated with the kernel
$\theta_T(x,y)$ is defined to satisfy the equation
\be \varrho_T(x,y)-\gamma\int_I \theta_T(x-z)\varrho_T(z,y)\
dz=\theta_T(x-y)\, . \ee
Then, in the thermodynamic limit, the temperature-dependent
correlator of the anyonic field operators is given by the following
relations:
\begin{itemize}
\item If $x'>x$,
\be \la\fad(x')\fa(x)\ra_{T,h}=\frac{1}{\pi}\varrho_T(x',x)
\left.\det\left(1-\gamma\hat \theta_T\right)\right|_{\gamma=(1+e^{+
i\pi\kappa})/\pi}\, , \ee
where $\det\left(1-\gamma\hat \theta_T\right)$ is the Fredholm
determinant of the integral operator $\hat\theta_T$ (see Appendix
\ref{Fredholm}) which acts on the interval $I_+=[x,x']$.
\item
If $x'<x$, \be
\la\fad(x')\fa(x)\ra_{T,h}=\frac{1}{\pi}\varrho_T(x,x')
\left.\det\left(1-\gamma\hat \theta_T\right)\right|_{\gamma=(1+e^{-
i\pi\kappa})/\pi}\, , \ee
and now the integral operator acts on the interval $I_-=[x',x]$.
\end{itemize}
Using Eq.~(\ref{RK}) we can also express these formulae in terms of
the first Fredholm minor of the integral operator $\hat\theta_T$.

The methods used in our work allows for direct extension of these
results to the $2n$-point correlators with certain ordering of the
arguments:
\begin{itemize}
\item If $x_1<x_1'<\cdots<x_n<x_n'$,
\be \la\fad(x_n')\cdots\fad(x_1')\fa(x_1)\cdots\fa(x_n)\ra_{T,h}=
\frac{C(x_1',\cdots,x_n)}{\pi^n} \varrho_T\left(\begin{array}{c}
x_1,\cdots,x_n\\ x_1',\cdots,x_n' \end{array}\right)
\left.\det\left(1-\gamma\hat
\theta_T\right)\right|_{\gamma=(1+e^{+i\pi\kappa})/\pi} , \ee
where now the integral operator acts in the region $I_+=[x_1,x_1']
\cup \cdots\cup[x_n,x_n']$, the statistics factor $C$ is defined by
Eqs.~(\ref{cc}), (\ref{aa}), and (\ref{bb}), and other notations are
given in Appendix \ref{Fredholm}.
\item If $x_1'<x_1<\cdots<x_n'<x_n$,
\be \la\fad(x_n')\cdots\fad(x_1')\fa(x_1)\cdots\fa(x_n)\ra_{T,h}=
\frac{C(x_1',\cdots,x_n)}{\pi^n} \varrho_T\left(\begin{array}{ccc}
x_1,\cdots,x_n\\ x_1',\cdots,x_n' \end{array}\right)
\left.\det\left(1-\gamma\hat \theta_T\right)\right|_{\gamma=(1+e^{-
i\pi\kappa})/\pi} , \ee
where the integral operator acts in the region
$I_-=[x_1',x_1]\cup\cdots\cup[x_n',x_n]$.
\end{itemize}
Again, Eq.~(\ref{RK}) can be used to express the $2n$-point
correlators as the $n$th Fredholm minors of the integral operator
$\hat\theta_T$.

The paper is organized as follows. In Sec.~\ref{sect1}, we introduce
the reduced density matrices of anyons and their relation to the
correlation functions of the anyonic fields. In Sec.~\ref{sect2}, we
describe the mappings between the wavefunctions of the
one-dimensional particles with Fermi, Bose, and Anyonic statistics.
In Sec.~\ref{sect3}, we prove the main theorem which expresses the
reduced density of anyons in terms of the reduced density matrices
of bosons and fermions. The consequences of this theorem in the case
of free impenetrable particles is analyzed in Sec.~\ref{sect4},
where we obtain the reduced density matrices of anyons as Fredholm
minors of an integral operator. The basic information on Fredholm
determinants, details of calculations with anyonic fields, and a
sketch of the derivation of the reduced density matrices for free
fermions in the grand canonical ensemble are presented in three
appendices.

\section{From Correlation Functions to Reduced Density Matrices} \label{sect1}

The one-dimensional anyons considered in this work are characterized
by anyonic fields $\fad(x), \fa(x)$ which obey the following
commutation relations
\be\label{com1}
\fa(x_1)\fad(x_2)=e^{-i\pi\kappa\epsilon(x_1-x_2)}\fad(x_2)\fa(x_1)+
\delta(x_1-x_2)\, ,
 \ee
\be\label{com2}
\fad(x_1)\fad(x_2)=e^{i\pi\kappa\epsilon(x_1-x_2)}\fad(x_2)\fad(x_1)\,
, \ee \be\label{com3}
\fa(x_1)\fa(x_2)=e^{i\pi\kappa\epsilon(x_1-x_2)}\fa(x_2)\fa(x_1)\, .
\ee
Here $\kappa$ is the statistics parameter, and $\epsilon(x)=x/|x|,\
\epsilon(0)=0$. The commutation relations become bosonic for
$\kappa=0$ and fermionic for $\kappa=1$. For an arbitrary
Hamiltonian of the anyons confined to the interval $V=[-L/2,L/2]$,
the $N$-particle eigenstates are defined as
\be\label{eq4e} |\Psi_N(\{\lam \})\ra=\frac{1}{\sqrt{N!}}\int_V
dz_1\cdots\int_V dz_N\ \chi^a_N(z_1,\cdots,z_N|\{\lam\})
\fad(z_N)\cdots\fad(z_1)|0\ra\, , \ee
\be\label{eq5e} \la\Psi_N(\{\lam \})|=\frac{1}{\sqrt{N!}}\int_V
dz_1\cdots\int_V dz_N\ \la 0|\fa(z_1)\cdots \fa(z_N) \chi^{*a}_N
(z_1,\cdots,z_N|\{\lam\})\, , \ee
where $\chi_N^a$ are the (norm one) quantum-mechanical wavefunctions
of $N$ anyons, and $\{\lam\}$ is a set of quantum numbers specifying
the state. Note the order of the field operators in these relations,
which is dictated by the boundary conditions on the wavefunctions in
the periodic or quasi-periodic case \cite{AN}. The wavefunctions
$\chi$ have the anyonic symmetry
\be \label{asdf}
\chi_N^a(z_1,\cdots,z_i,z_{i+1},\cdots,z_N)=e^{i\pi\kappa
\epsilon(z_i-z_{i+1})}\chi_N^a(z_1,\cdots,z_{i+1},z_{i},
\cdots,z_N)\, , \ee
that reflects the field commutation relations. We are interested in
computing the  finite-temperature correlation functions of anyonic
fields. The simplest example of these correlators is
\be \la\fad(x')\fa(x)\ra\, . \ee
In the grand canonical ensemble characterized by temperature $T$ and
chemical potential $h$, the field correlation function is given by
following relation
\be\label{2correlation}
\la\fad(x')\fa(x)\ra_{T,h}=\sum_{N=1}^\infty\sum_{\{\lam\}}e^{hN/T}
\frac{e^{-E(\{\lam\})/T}}{Z(h,V,T)}
\la\Psi_N(\{\lam\})|\fad(x')\fa(x)|\Psi_N(\{\lam\})\ra\, , \ee
where $E(\{\lam\})$ is the energy of the eigenstate with quantum
numbers $\{\lam\},$ and $Z(h,V,T)$ is the grand-canonical partition
function
\be Z(h,V,T)=\sum_{N=0}^\infty\sum_{\{\lambda\}}e^{hN/T}
e^{-E(\{\lambda\})/T}\, . \ee
As shown in the Appendix \ref{ac}, the correlator at  a fixed number of
particles $N$ is given by an overlap integral of the corresponding
wavefunction
\be \label{eq100}
\la\Psi_N(\{\lam\})|\fad(x')\fa(x)|\Psi_N(\{\lam\})\ra= N\int_V
dz_1\cdots\int_V dz_{N-1}\
\chi_N^{*a}(z_1,\cdots,z_{N-1},x'|\{\lam\})\chi_N^a(z_1,\cdots,
z_{N-1},x|\{\lam\})\, , \ee
so that the correlation function (\ref{2correlation}) can be written
as
\begin{eqnarray}\label{2point}
\la\fad(x')\fa(x)\ra_{T,h} &=& \sum_{N=1}^\infty
\sum_{\{\lam\}}e^{hN/T} \frac{e^{-E(\{\lam\})/T}}{Z(h,V,T)} N\int_V
dz_1\cdots\int_V dz_{N-1} \nonumber \\ & &
\times\chi_N^{*a}(z_1,\cdots,z_{N-1},x' |\{\lam\})
\chi_N^a(z_1,\cdots, z_{N-1},x|\{\lam\})\, .
\end{eqnarray}

We will also be interested in a class of $2n$-point field
correlation functions at finite temperature:
\be \la\fad(x_n')\cdots\fad(x_1')\fa(x_1)\cdots\fa(x_n)\ra_{T,h}\, .
\ee
These correlators can be expressed similarly to Eq.~(\ref{2point})
\begin{equation*}
\la\fad(x_n')\cdots\fad(x_1')\fa(x_1) \cdots\fa(x_n) \ra_{T,h}
=\sum_{N=n}^\infty\sum_{\{\lam\}}e^{hN/T} \frac{e^{-E(\{\lam\})/T}
}{Z(h,V,T)} \frac{N!}{(N-n)!} \int_Vdz_1\cdots\int_Vdz_{N-n}
\end{equation*}
\be\label{multi} \times \chi_N^{*a}(z_1,\cdots,z_{N-n},x_1',\cdots,
x_n'|\{\lam\})\chi_N^a(z_1,\cdots,z_{N-n},x_1,\cdots,x_n|\{\lam\})\,
. \ee

As one can see from (\ref{2point}) and  (\ref{multi}), the
correlation functions are obtained as a combination of the
wavefunctions and ensemble probabilities. In this context, it is
useful, similarly to the case of fermionic or bosonic particles, to
introduce the reduced density matrices of anyons:
\begin{defi}\label{definition}
For a statistical ensemble characterized by the probabilities
$p_{\{\lam\}}^N$,  the anyonic n-particle reduced density matrix is
defined as
\begin{equation*}
(x_1,\cdots,x_n|\rho^a_n|x_1',\cdots,x_n')= \sum_{N=n}^\infty
\sum_{\{\lam\}} p^N_{\{\lam\}}\frac{N!}{(N-n)!}\int_V dz_{1}\cdots\int_V
dz_{N-n}
\end{equation*}
\be \times\chi_{N}^{*a}(z_1,\cdots,z_{N-n},x_1',\cdots,x_n'|\{\lam\})
\chi_{N}^{a}(z_1,\cdots,z_{N-n},x_1,\cdots,x_n|\{\lam\})\, , \label{def}
\ee
where the wavefunctions $\chi_{N,\{\lam\}}^a$ are normalized to one.
\end{defi}

If the probabilities $p_{\{\lam\}}^N$ coincide with those in the grand
canonical ensemble, $p_{\{\lam\}}^N=e^{Nh/T} e^{-E(\{\lam\})/T}
/Z(h,V,T)$, the one-particle reduced density matrix is just the
2-point correlator (\ref{2point}), and the $n$-particle reduced
density matrix is the particular $2n$-point correlator
(\ref{multi}). These relations are exactly the same as in the case
of bosonic and fermionic statistics. A particular ``anyonic''
feature of the Definition \ref{definition}, is the fact that we
integrate over the first $N-n$ arguments of the wavefunctions. In
the case of bosonic and fermionic reduced density matrices,
integration over any subset of the $N-n$ out of $N$ arguments
produces the same result due to the parity of the wavefunctions.
This is not the case for the reduced density matrices of anyons due
to the anyonic symmetry (\ref{asdf}) which in general, e.g., in the
periodic or quasi-periodic situation (``anyons on a ring'') makes
different arguments of the wavefunctions inequivalent -- see
discussion in the next Section.

Under a certain set of conditions, which is also made precise in the
next Section, there is a correspondence between the anyonic and the
fermionic or bosonic wavefunctions. This correspondence will be used
later to express the anyonic reduced density matrices as expansions
in terms of fermionic or bosonic ones.

\section{Anyon-Fermi and Anyon-Bose mapping} \label{sect2}

To establish the correspondence between the wavefunctions of anyons
and fermions or bosons we define the two functions which essentially
incorporate statistical properties of the wavefunctions of different
statistics in one dimension:
\be A_\kappa(z_1,\cdots,z_N)=e^{i\pi\kappa\sum_{j<k}
\epsilon(z_j-z_k)/2} \label{aa} \ee
and \be \label{bb} B(z_1,\cdots,z_N)=\prod_{j>k}\epsilon(z_j-z_k)\,
, \ee
where the notations are the same as in Eq.~(\ref{com1}).

The mapping between anyons and fermions or bosons is analogous to
the Bose-Fermi mapping discovered in \cite{G}, where it was noticed
that any wavefunction of $N$ fermions has a bosonic counterpart
given by
\be\label{BF}
\chi^b(z_1,\cdots,z_N)=B(z_1,\cdots,z_N)\chi^f(z_1,\cdots,z_N)\, .
\ee
This correspondence is valid under very general conditions, with no
restrictions on the external or particle-particle interaction
potential, except for the requirement of the hard-core condition
which should make the bosons impenetrable. For particles confined to
a box with ``hard wall" boundary conditions (BC), bosonic and
fermionic wavefunctions satisfy the same BC. In this case, if
$\chi^f$ is an eigenfunction of the Hamiltonian, then $\chi^b$ is
also an eigenfunction with the same eigenvalue. However, in the case
of a ring of length $L$ with periodic BC for bosons, the BC for the
fermions are in general different and given by
\be \chi^f(0,\cdots,z_N)=(-1)^{N-1}\chi^f(L,\cdots,z_N)\, .
\label{bf} \ee
For even $N$, when Eq.~(\ref{bf}) means that the BC for fermions and
bosons are different by a phase shift $\pi$, the  relation between the
eigenenergies of the fermionic and bosonic systems is less direct.
Since for non-coincident coordinates $B^2\equiv 1$, the Bose-Fermi
mapping (\ref{BF}) is symmetric and remains true if the superscripts
$b$ and $f$ are interchanged.

\subsection{Anyon-Fermi Mapping}

It is straightforward to see that similarly to the Bose-Fermi
mapping (\ref{BF}), the wavefunction with anyonic symmetry
(\ref{asdf}) can be obtained by multiplication of a fermionic
wavefunction with the statistics factors (\ref{aa}) and (\ref{bb})
\cite{G2,AN}:
\be \label{AF}
\chi^a(z_1,\cdots,z_N)=A_\kappa(z_1,\cdots,z_N)B(z_1,\cdots,z_N)
\chi^f(z_1,\cdots,z_N)\, , \ee
Besides the anyonic symmetry (\ref{asdf}), the wavefunction $\chi^a$
(\ref{AF}) satisfies the condition
\be \chi^a(z_1,\cdots,z_N)|_{z_i=z_j}=0\,\ \ \mbox{ for all }
\{i,j\} \in \{1,\cdots,N\} \,. \label{hc} \ee
This means that the correspondence (\ref{AF}) is valid as long as
potential energy contains a hard-core part which ensures that the
anyons are impenetrable and condition (\ref{hc}) is indeed
satisfied. Other properties of the Anyon-Fermi mapping (\ref{AF})
are similar to those of the Bose-Fermi mapping. It is valid for an
arbitrary form of the potential energy in the particle Hamiltonian.
When the particles are confined to an interval with {\em ``hard
wall" boundary conditions}, the fermionic and anyonic systems both
satisfy the same BC of the wavefunctions vanishing at the ends of
the interval. In this case, if $\chi^f$ is an eigenfunction of the
particle Hamiltonian, then $\chi^a$ is also an eigenfunction of this
Hamiltonian with the same eigenvalue. This follows from the fact
that the statistics factors (\ref{aa}) and (\ref{bb}) are constant
everywhere except for the points of coincident coordinates, where
the wavefunctions vanish.

When particles are confined to a ring with {\em periodic or
quasi-periodic BC}, the properties of the Anyon-Fermi mapping are
more complicated. In this case, the anyonic wavefunction will have
different boundary conditions for each of its coordinate (see
\cite{AN} and Appendix A of \cite{PKA}), the difference being given
by an extra phase shift that depends on the statistics parameter
$\kappa$. Specifically, if the fermion wave function obeys some
generic quasi-periodic BC (the same in all coordinates) which can be
written as
\be \chi^f(0,\cdots,z_N)=(-1)^{N-1}e^{-i\phi}\chi^f(L,\cdots,z_N)\,
, \ee
then the anyonic wavefunction obeys the following BC in its
different arguments
\begin{eqnarray}
\chi^a(0,z_2,\cdots,z_N)& =& e^{-i\overline\phi}\ \ \chi^a(L,z_2,\cdots,z_N)
\, , \nonumber
\\ \chi^a(z_1,0,\cdots,z_N) &=&e^{i (2\pi\kappa -\overline\phi)}
\chi^a(z_1,L,\cdots,z_N) \, , \nonumber \\
                 & \vdots &   \\
\chi^a(z_1,z_2,\cdots,0)&=&e^{i(2(N-1) \pi\kappa -\overline\phi)}
\chi^a(z_1,z_2\cdots,L) \, , \nonumber
\end{eqnarray}
where $\overline\phi=\phi+\pi\kappa(N-1)$. As for the Bose-Fermi
mapping (\ref{BF}) with even $N$, the anyonic and fermionic
eigenenergies are not related directly in the situation of a ring
with quasiperiodic BC. In physics terms, this difference between
anyons and fermions corresponds to the statistical magnetic flux
$\pi \kappa (N-1)$ through the ring produced by $N$ one-dimensional
anyons of statistics $\kappa$.

The Anyon-Fermi mapping (\ref{AF}) is not symmetric. The inverse
relation can be written as
\be \chi^f(z_1,\cdots,z_N)=A_{-\kappa}(z_1,\cdots,z_N) B(z_1,\cdots,
z_N) \chi^a(z_1,\cdots,z_N)\, . \ee

\subsection{Anyon-Bose Mappping}

The Anyon-Bose mapping was historically the first mapping of this
kind introduced for one-dimensional anyons in \cite{Kundu}:
\be\label{AB}
\chi^a(z_1,\cdots,z_N)=A_\kappa(z_1,\cdots,z_N)\chi^b(z_1,\cdots,z_N)\,
. \ee
The wavefunction $\chi^a$ in (\ref{AB}) has the correct anyonic symmetry
(\ref{asdf}), and in contrast to Anyon-Fermi mapping (\ref{AF}),
need not vanish when a pair of coordinates
coincide. It should be noted, however, that without this hard-core
condition, the discontinuity of the statistics factor $A_\kappa$
(\ref{aa}) at coinciding coordinates translates into discontinuity
of the wavefunctions (\ref{AB}). In this case one needs an
additional condition regularizing the wavefunctions. Also, without
the hard-core condition, the statistics factor $A$ changes
substantially the behavior of the wavefunctions at the points of
coincident coordinates (for instance, if the particle-particle
interaction is $\delta$-functional, statistics renormalizes the
interaction strength \cite{PKA}) and the energy eigenvalues of the
bosonic and anyonic problems are different regardless of the
boundary conditions.

Other properties of the Anyon-Bose  mapping (\ref{AB}) are very
similar to those of the Anyon-Fermi mapping. It is valid for
arbitrary potential energy. For particles in a box with ``hard wall"
BC, both wavefunctions (\ref{AB}) satisfy the same condition
$\chi^a=0$ and $\chi^b=0$ at the boundary. For particles on a ring
with generic quasi-periodic BC for bosons that can be written as
\be \chi^b(0,\cdots,z_N)=e^{-i\phi }\chi^b(L,\cdots,z_N) \ee
(and have the same form for all other arguments of $\chi^b$), the
anyonic wavefunction obeys the following BC
\begin{eqnarray}
\chi^a(0,z_2,\cdots,z_N)& =& e^{-i\overline\phi}\ \ \ \
\chi^a(L,z_2,\cdots,z_N) \, , \nonumber
\\ \chi^a(z_1,0,\cdots,z_N) &=&e^{i(2\pi\kappa-\overline\phi)}
\chi^a(z_1,L,\cdots,z_N) \, , \nonumber \\
                 & \vdots &   \\
\chi^a(z_1,z_2,\cdots,0)&=&e^{i(2(N-1) \pi\kappa-\overline\phi)}
\chi^a(z_1,z_2\cdots,L) \, , \nonumber
\end{eqnarray}
where $\overline\phi=\phi+\pi\kappa(N-1)$.
The Anyon-Bose mapping is also not symmetric. The inverse of
(\ref{AB}) is
\be
\chi^b(z_1,\cdots,z_N)=A_{-\kappa}(z_1,\cdots,z_N)\chi^a(z_1,\cdots,z_N)\,
. \ee

\section{Main Theorem}\label{sect3}

In this section, we consider an arbitrary statistical ensemble in
which the states $\chi_{N,\{\lam\}}^a$ occur with probabilities
$p^N_{\{\lam\}}$. The anyons are assumed to be confined to an interval
$V=[-L/2,L/2]$, and wavefunctions are normalized to 1:
$\|\chi^a_{N,\{\lam\}}\|=1$. Our goal is to establish a relation between
the reduced density matrices $\rho_n^a$ of anyons (\ref{def}) and
similarly defined reduced density matrices of bosons $\rho_m^b$ and
fermions $\rho_p^f$. The fermionic and bosonic states that
correspond to the anyonic states $\chi_{N,\{\lam\}}^a$ in the
Anyon-Fermi (\ref{AF}) or Anyon-Bose (\ref{AB}) mappings have
similarly normalized wavefunctions, and we assume that they have the
same probabilities $p_{\{\lam\}}^N$. This assumption is natural under the
conditions (discussed in the previous Section) for which the
energies of the states of different statistics are the same, as they
are, for instance, when the wavefunctions satisfy the ``hard wall"
boundary conditions and hard-core condition on the particle-particle
interaction. The relation between the reduced density matrices is
established by the following theorem:

\begin{thm}\label{main}
Let $x_1,\cdots,x_n,x_1',\cdots,x_n'$ be $2n$ coordinates in the
interval $V$, and $O_{\pm}$ are the parts of the space of these
coordinates in which they are ordered, respectively, as
$x_1<x_1'<\cdots<x_n<x_n'$ and $x_1'<x_1<\cdots<x_n'<x_n$.  For the
$O_+$ ordering, one can define the subset of $V$:
$I_+=[x_1,x_1']\cup\cdots\cup[x_n,x_n']\subset V$, and the subset
$I_-=[x_1',x_1]\cup \cdots \cup[x_n',x_n]\subset V$ for ordering as
in $O_-$. If the conditions of validity of the Anyon-Fermi
(\ref{AF}) or the Anyon-Bose mapping (\ref{AB}) are fulfilled, the
reduced density matrices of anyons can be expressed then in terms of
the reduced density matrices of fermions as
\begin{equation*}
(x_1,\cdots,x_n|\rho_n^a|x_1',\cdots,x_n')_{\pm}=A_{-\kappa}(x_1',
\cdots,x_n')B(x_1',\cdots,x_n')A_\kappa(x_1,\cdots,x_n)B(x_1,\cdots,x_n)
\end{equation*}
\be\label{rdmaf} \times\sum_{j=0}^\infty(-1)^j\frac{(1+e^{\pm
i\pi\kappa})^j}{j!}\int_{I_\pm}dz_{1}\cdots\int_{I_\pm}dz_{j}\
(x_1,\cdots,x_n,z_1,\cdots,z_j|\rho_{n+j}^f|x_1',\cdots,x_n',z_1,
\cdots,z_j) \ee
or bosons as
\begin{equation*}
(x_1,\cdots,x_n|\rho_n^a|x_1',\cdots,x_n')_{\pm}=A_{-\kappa}
(x_1',\cdots,x_n')A_\kappa(x_1,\cdots,x_n)
\end{equation*}
\be\label{rdmab} \times\sum_{j=0}^\infty(-1)^j\frac{(1-e^{\pm
i\pi\kappa})^j}{j!} \int_{I_\pm}dz_{1}\cdots\int_{I_\pm}dz_{j}\
(x_1,\cdots,x_n,z_1,\cdots,z_j|\rho_{n+j}^b|x_1',\cdots,x_n',z_1,
\cdots,z_j)\, . \ee
The subscript $\pm$ in these expressions specifies whether
$x_1,\cdots ,x_n,x_1',\cdots,x_n'$ are ordered as in $O_+$ or $O_-$.
\end{thm}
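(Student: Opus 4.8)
The plan is to start from the explicit overlap-integral form of the anyonic reduced density matrix in Definition \ref{definition}, substitute the Anyon-Fermi mapping (\ref{AF}) into each wavefunction, and then re-expand the product of statistics factors as a sum over extra integrated coordinates that land inside the intervals $I_\pm$. Concretely, writing $\chi_N^a = A_\kappa B\, \chi^f$ for the ket and the conjugate relation for the bra, the integrand of (\ref{def}) becomes
\begin{equation*}
A_{-\kappa}(\{z\},\{x'\})B(\{z\},\{x'\})\,A_{\kappa}(\{z\},\{x\})B(\{z\},\{x\})\;
\chi_N^{*f}(\{z\},\{x'\})\chi_N^{f}(\{z\},\{x\})\, ,
\end{equation*}
where $\{z\}=(z_1,\dots,z_{N-n})$ are the integrated variables. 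The first step is to factor these statistics prefactors into the piece $A_{-\kappa}(\{x'\})B(\{x'\})A_\kappa(\{x\})B(\{x\})$ that depends only on the external coordinates (this is what appears in front of the sum in (\ref{rdmaf})), times a remainder that couples each integration variable $z_\ell$ to the external coordinates. Because $A_\kappa$ is built from $\epsilon(z_j-z_k)/2$ in the exponent and $B$ from products of $\epsilon(z_j-z_k)$, the cross terms between a single $z_\ell$ and the full set $\{x\},\{x'\}$ collapse — using $\epsilon$-algebra and the ordering $O_\pm$ — to a factor that depends only on whether $z_\ell$ lies inside or outside the interval $I_\pm$; outside $I_\pm$ it equals $1$, and inside it equals the constant $-(1+e^{\pm i\pi\kappa})$ (the sign $\pm$ being exactly the $O_\pm$ ordering dependence, and the phase coming from $A_{-\kappa}A_\kappa$ evaluated across the paired points). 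This is the computational heart of the argument and should be done carefully for a single $z_\ell$, then invoked for each of the $N-n$ variables.

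The second step is combinatorial. Having reduced the integrand to $A_{-\kappa}(\{x'\})B(\{x'\})A_\kappa(\{x\})B(\{x\})$ times $\prod_{\ell}\bigl[1 - (1+e^{\pm i\pi\kappa})\mathbf{1}_{I_\pm}(z_\ell)\bigr]$ times the fermionic product $\chi_N^{*f}\chi_N^f$, one expands the product over $\ell$ and, for a term with exactly $j$ of the integration variables forced into $I_\pm$, uses the permutation symmetry of the fermionic reduced density matrix (which, unlike the anyonic one, is symmetric in any subset of its integrated arguments) to relabel those $j$ variables. Each such term, after integrating the remaining $N-n-j$ variables over all of $V$, is precisely $\frac{1}{(N-n-j)!}\cdot\frac{(N-n)!}{j!}$ copies — the binomial from choosing which variables go inside — times an integral of $(\dots|\rho_{n+j}^f|\dots)$ over $I_\pm^{\,j}$, after one rewrites the definition of $\rho^f_{n+j}$ back in terms of the $(N-j)$-fold fermionic overlap. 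Summing over $N$ and tidying the factorials should produce exactly $\sum_{j\ge 0}\frac{(-1)^j(1+e^{\pm i\pi\kappa})^j}{j!}\int_{I_\pm}\cdots\int_{I_\pm}(\dots|\rho^f_{n+j}|\dots)$, which is (\ref{rdmaf}). The factor $(-1)^j$ is the sign carried along from $-(1+e^{\pm i\pi\kappa})$, and the $1/j!$ is what remains after the binomial coefficient cancels against the overcounting of the relabeled variables.

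The third step is to repeat the argument verbatim with the Anyon-Bose mapping (\ref{AB}) in place of (\ref{AF}). Since (\ref{AB}) has no $B$ factor, the external prefactor is just $A_{-\kappa}(\{x'\})A_\kappa(\{x\})$, and the single-variable cross-term computation now yields the constant $-(1-e^{\pm i\pi\kappa})$ inside $I_\pm$ (the sign flip on the exponential is the only change, traceable to the absence of the $B$'s, which in the fermionic case contributed an extra $\epsilon$ per pair); everything else is identical, giving (\ref{rdmab}). I expect the main obstacle to be the first step: pinning down, with the correct sign and phase and with full care about the $\epsilon(0)=0$ convention and the placement of coincident points, the value of the collapsed statistics factor for a single integration variable as a function of its position relative to the $n$ pairs $(x_i,x_i')$ in the $O_\pm$ ordering. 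Once that local computation is nailed, the rest is bookkeeping of factorials and an appeal to fermionic/bosonic exchange symmetry.
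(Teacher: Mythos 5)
Your proposal is correct and follows essentially the same route as the paper: the same factorization into the external statistics prefactor $A_{-\kappa}B\,A_{\kappa}B$ (resp.\ $A_{-\kappa}A_{\kappa}$) times a single-variable cross factor that is $1$ outside $I_\pm$, with your indicator-function expansion being exactly the paper's Lemma with $\alpha=-e^{\pm i\pi\kappa}$ (Fermi) and $\alpha=e^{\pm i\pi\kappa}$ (Bose), followed by the same relabeling/resummation identifying $\rho^f_{n+j}$, $\rho^b_{n+j}$. One wording slip only: the collapsed cross factor \emph{inside} $I_\pm$ equals $-e^{\pm i\pi\kappa}$ (Fermi) and $e^{\pm i\pi\kappa}$ (Bose), not $-(1+e^{\pm i\pi\kappa})$ or $-(1-e^{\pm i\pi\kappa})$; those constants are the coefficients of $\mathbf{1}_{I_\pm}$ in the decomposition $1-(1+e^{\pm i\pi\kappa})\mathbf{1}_{I_\pm}$ (resp.\ $1-(1-e^{\pm i\pi\kappa})\mathbf{1}_{I_\pm}$), which is what you then correctly expand, so the final coefficients come out right.
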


\begin{proof} The proof follows that of Lenard \cite{L1}, generalizing
it to the anyonic statistics. As a first step, we need a preliminary
result.

\begin{lem}\label{intQ}
For any symmetric function $f(z_1,\cdots,z_n)$ and a constant
$\alpha$,
\be\label{integral} \int_V dz_1\cdots\int_V dz_n\
\alpha^{\sigma(I_\pm)}f(z_1,\cdots,z_n)=\sum_{j=0}^n
C^n_j(-1+\alpha)^j\int_{I_\pm}dz_1\cdots\int_{I_\pm}dz_j \int_V
dz_{j+1}\cdots\int_V dz_n f(z_1,\cdots,z_n) , \ee
where $C^n_j=\frac{n!}{(n-j)!j!}$ and $\sigma(I_\pm)$ is the number of variables $z_1,\cdots, z_n$ contained in $I_\pm$.

\end{lem}

\begin{proof}

The L.H.S. of (\ref{integral}) can be written explicitly as
\be Q=\sum_{m=0}^n C_m^n \alpha^m\int_{I_\pm}
dz_1\cdots\int_{I_\pm}dz_m\int_{V\backslash I_{\pm}}dz_{m+1}
\cdots\int_{V\backslash I_{\pm}}dz_n f(z_1,\cdots,z_n)\, , \ee
and combined with an obvious relation $\int_{V\backslash
I_\pm}dz_i=\int_Vdz_i-\int_{I_\pm}dz_i\, ,\ (i=m+1,\cdots,n)$ can be
further transformed into
\be Q=\sum_{m=0}^n C_m^n
\alpha^m\sum_{k=0}^{n-m}C_k^{n-m}(-1)^k\int_{I_\pm}
dz_1\cdots\int_{I_\pm}dz_{m+k}\int_{V}dz_{m+k+1} \cdots\int_{V}dz_n
f(z_1,\cdots,z_n)\, . \ee
Collecting the terms in this expression with the same $j=m+k$, we
obtain the desired result
\begin{equation*}
Q=\sum_{j=0}^n C^n_j(-1+\alpha)^j\int_{I_\pm}dz_1\cdots \int_{I_\pm}
dz_j \int_V dz_{j+1}\cdots\int_V dz_n f(z_1, \cdots,z_n)\, .
\end{equation*}

\end{proof}

Now we can prove the Theorem \ref{main} starting with (\ref{rdmaf}).
Using the Anyon-Fermi mapping (\ref{AF}) we have

\begin{equation*}
(x_1,\cdots,x_n|\rho^a_n|x_1',\cdots,x_n')_{\pm}=\sum_{N=n}^\infty\sum_{\{\lam\}}
p^N_{\{\lam\}}\frac{N!}{(N-n)!} \int_V dz_{1}\cdots\int_V
dz_{N-n}C(x_1,\cdots,x_n,x_1',\dots,x_n')_{\pm}
\end{equation*}
\be \times\chi_{N}^{*f}(z_1,\cdots,z_{N-n},x_1',\cdots,x_n'|{\{\lam\}})
\chi_{N}^{f}(z_1,\cdots,z_{N-n},x_1,\cdots,x_n|{\{\lam\}})\, , \ee
where
\begin{eqnarray}
C(x_1,\cdots,x_n,x_1',\cdots,x_n')_{\pm}&=&A_{-\kappa}(x_1', \cdots
,x_n')B(x_1',\cdots,x_n')A_{\kappa}(x_1,\cdots,x_n)B(x_1,\cdots,
x_n) \nonumber\\
& &\times\prod_{j=1}^n\prod_{i=1}^{N-n} e^{-i\pi\kappa
\epsilon(z_i-x_j')/2 }e^{i\pi\kappa \epsilon(z_i-x_j)/2}
\epsilon(x_j'-z_i) \epsilon(x_j-z_i)\, .
\end{eqnarray}
One can see directly that
\be \prod_{j=1}^ne^{-i\pi\kappa \epsilon(z-x_j')/2}e^{i\pi\kappa
\epsilon(z-x_j)/2 } \epsilon(x_j'-z)\epsilon(x_j-z) =\left\{
\begin{array}{cl}-e^{\pm i\pi\kappa}, & z\;\; \mbox{ in } \; I_\pm\, ,\\
1,& z \, \mbox{ not in } \, I_\pm\, . \end{array}\right. \ee
This means that
\be
C(x_1,\cdots,x_n,x_1',\cdots,x_n')_{\pm}=A_{-\kappa}(x_1',\cdots,
x_n')B(x_1',\cdots,x_n') A_{\kappa}(x_1,\cdots,x_n)B(x_1,\cdots,x_n)
(-e^{\pm i\pi\kappa})^{\sigma'(I_\pm)}\, , \ee
where $\sigma'(I_\pm)$ is the number of variables $z_1,\cdots,
z_{N-n}$ in $I_\pm$. Applying now Lemma \ref{intQ} with
$\alpha=-e^{\pm i\pi\kappa}$, we obtain for the anyonic reduced
density matrices
\begin{equation*}
(x_1,\cdots,x_n|\rho^a_n|x_1',\cdots,x_n')_{\pm}=A_{-\kappa}
(x_1',\cdots,x_n')B(x_1',\cdots,x_n')A_{\kappa}(x_1,\cdots,x_n)
B(x_1,\cdots,x_n)\sum_{N=n}^\infty\sum_{\{\lam\}} p^N_{\{\lam\}}\frac{N!}{
(N-n)!} \end{equation*}
\be
\times\sum_{j=0}^{N-n}C^{N-n}_j(-1)^j(1+e^{\pm i\pi\kappa})^j
\int_{I_\pm}dz_1\cdots\int_{I_\pm}dz_{j}\int_Vdz_{j+1}\cdots\int_Vdz_{N-n}\
\chi^{*f}_{N,\{\lam\}}\chi^{f}_{N,\{\lam\}}\, . \ee
Interchanging the order of summations, one can notice that the sum
over $N$ and $\{\lam\}$ is precisely $\rho_{n+j}^f$. Therefore finally
\begin{equation*}
(x_1,\cdots,x_n|\rho_n^a|x_1',\cdots,x_n)_{\pm}=A_{-\kappa}(x_1',
\cdots,x_n')B(x_1',\cdots,x_n')A_\kappa(x_1,\cdots,x_n)
B(x_1,\cdots,x_n)
\end{equation*}
\be \times\sum_{j=0}^\infty(-1)^j\frac{(1+e^{\pm i\pi\kappa})^j}{j!}
\int_{I_\pm}dz_{1}\cdots\int_{I_\pm}dz_{j}\ (x_1,\cdots, x_n, z_1,
\cdots,z_j|\rho_{n+j}^f|x_1',\cdots,x_n',z_1,\cdots,z_j)\, . \ee

The proof of (\ref{rdmab}) is similar. In this case, we use the
Anyon-Bose mapping (\ref{AB}), and the $C_{\pm}$ function is
\begin{eqnarray}
C(x_1,\cdots,x_n,x_1',\cdots,x_n')_{\pm}&=&A_{-\kappa}(x_1',
\cdots,x_n')A_{\kappa}(x_1,\cdots,x_n) \nonumber\\
& &\prod_{j=1}^n\prod_{i=1}^{N-n} e^{-i\pi\kappa \epsilon(z_i
-x_j')/2} e^{i\pi\kappa\epsilon(z_i-x_j)/2}\, .
\end{eqnarray}
This means that we can use Lemma \ref{intQ} with $\alpha=e^{\pm
i\pi\kappa}$. Interchanging the order of summation and identifying
the bosonic reduced density matrices $\rho_{n+j}^b$ we obtain
(\ref{rdmab}).
\end{proof}

The results of Theorem \ref{main} do not depend on the statistical
ensemble used in the computation of the reduced density matrices as
long as the particles are subject to the hard-wall boundary
conditions making the state energies independent of the statistics.
They also do not depend on the form of the interparticle potential
beyond the need for the hard-core part which ensures that the
wavefunctions satisfy the hard-core condition. If both of these
conditions are satisfied, we can see from (\ref{rdmaf}) and
(\ref{rdmab}) that there is also no explicit dependence on the
length $L$ of the confining box $V$, and the results remain valid in
the thermodynamic limit $L\rightarrow\infty$.

\section{Impenetrable Free Case}\label{sect4}

The Anyon-Fermi relation derived above for the reduced density
matrices is particularly useful in the situation when the radius of
the hard-core interaction is vanishingly small, and no other
interactions are present. In this case, the fermionic problem is
identical to free fermions, since the hard-core potential of zero
radius effectively vanished due to antisymmetry of the
wavefunctions. The reduced density matrices $\rho_n^f$ coincide then
with those of free fermions \cite{L1} (see Appendix \ref{rdmff}):
\be (x_1,\cdots,x_n|\rho_n^f|x_1',\cdots,x_n')=\frac{1}{\pi^n}
\theta_T\left(\begin{array}{cc} x_1,\cdots,x_n\\
x_1',\cdots,x_n' \end{array}\right) , \label{4a} \ee
where $\theta_T(x,y)/\pi$ is the Fourier transform of the Fermi
distribution function:
\be \label{kernel} \theta_T(x,y)=\frac{1}{2}\int_{-\infty}^\infty
dk\frac{e^{ik(x-y)}}{1+e^{(k^2-h)/T}}\, . \ee At $T=0$ we have
\be\label{sinekernel} \theta_{0}(x,y)=\frac{\sin q(x-y)}{x-y}\, ,
\ee where $q=\sqrt{h}$ is the Fermi momentum.

Applying Theorem \ref{main} and (\ref{int}) we have
\be (x_1,\cdots,x_n|\rho_n^a|x_1',\cdots,x_n')_{\pm}= C
\sum_{j=0}^\infty(-1)^j\frac{(1+e^{\pm i\pi\kappa})^j}{j!}
\int_{I_\pm}dz_{1}\cdots\int_{I_\pm}dz_{j}\
\frac{1}{\pi^{n+j}}\theta_T\left(\begin{array}{cc}
x_1,\cdots,x_n,&z_{1},\cdots,z_{j}\\
x_1',\cdots,x_n',&z_{1},\cdots,z_{j}
\end{array}\right) ,
\ee
where
\begin{equation}
C(x_1',\cdots,x_n)\equiv A_{-\kappa}(x_1',
\cdots,x_n')B(x_1',\cdots,x_n')A_\kappa(x_1,\cdots,x_n)B(x_1,\cdots,x_n)
\label{cc} \end{equation}
and the subscript $\pm$ specifies particular ordering of
$x_1,\cdots,x_n,x_1',\cdots,x_n'$ as in Theorem \ref{main}. This
result can be rewritten in terms of Fredholm minors using
(\ref{Fredmin}) \be
(x_1,\cdots,x_n|\rho_n^a|x_1',\cdots,x_n')_{\pm}= \frac{1}{\pi^n}
C(x_1',\cdots,x_n) \det\left.\left(1-\gamma\hat
\theta_T^\pm \left|\begin{array}{c} x_1,\cdots,x_n\\
x_1',\cdots,x_n' \end{array}\right.\right) \right|_{\gamma=(1+e^{\pm
i \pi \kappa}) /\pi}\, , \ee
where the integral operator $\hat \theta^\pm_T$ with kernel
$\theta_T(x,y)$ is defined by its action on an arbitrary function
$f$:
\be (\hat \theta^\pm_T f)(x)=\int_{I_\pm}\theta_T(x,y)f(y)dy\,  \ee
Finally, introducing the resolvent kernel $\varrho_T^\pm(x,y)$ associated with the
the kernel $\theta_T(x,y)$, which satisfies
\be \varrho_T^\pm(x,y)-\frac{(1+e^{\pm
i\pi\kappa})}{\pi}\int_{I_{\pm}}\theta_T(x-z)\varrho_T^\pm(z,y)
dz=\theta_T(x-y)\, , \ee
and making use of (\ref{RK}), (\ref{multi}) and (\ref{2point}), we
obtain
\be
\la\fad(x_n')\cdots\fad(x_1')\fa(x_1)\cdots\fa(x_n)\ra_{T,h,\pm}=
\frac{ C(x_1',\cdots,x_n)}{\pi^n}
\varrho_T^\pm\left(\begin{array}{c} x_1,\cdots,x_n\\
x_1',\cdots,x_n' \end{array}\right) \left.\det\left(1-\gamma\hat
\theta_T^\pm\right)\right|_{\gamma=(1+e^{\pm i\pi  \kappa})/\pi} .
\ee
In the particular case of the simplest two-point correlator, this
expression reduces to
\be\label{2we}
\la\fad(x')\fa(x)\ra_{T,h,\pm}=\frac{1}{\pi}\varrho_T^\pm(x',x)
\left.\det\left(1-\gamma\hat \theta_T^\pm\right)\right|_{\gamma=
(1+e^{\pm i\pi\kappa})/\pi} , \ee
and gives the correlator of two anyonic fields in terms of the
Fredholm determinant of the integral operator $\hat\theta_T$ and its
resolvent kernel.

\section{Conclusions}

In this work, we have studied the field correlation functions
(reduced density matrices) for free impenetrable anyons in one
dimension and obtained the representation for these functions in
terms of the Fredholm minors of the integral operator with the
kernel given by (\ref{kernel}). This representation of the
correlation functions generalizes similar results \cite{L1} for
one-dimensional impenetrable bosons, and can be used to study the
asymptotic behavior of the correlation functions beyond the
approximation of conformal invariance. In the case of bosons, there
is also an alternative representation of the zero-temperature
one-particle reduced density matrix as the determinant of a Toeplitz
matrix, development of which was motivated by the study of momentum
distribution of bosons in the ground state \cite{G,L,S}. This
representation was extended recently to anyons by Santachiara {\it
et al.} \cite{SSC} in their study of the entanglement entropy of
impenetrable free anyons. As in the case of bosons, explicit
demonstration of the equivalence of the two representations for
anyonic correlation functions is an open problem.

\acknowledgments

This work was supported in part by the NSF grant DMS-0503712 and
DMR-0325551.

\appendix

\section{Fredholm Determinants}\label{Fredholm}

In this Appendix, we give a brief summary of results of Fredholm
theory of integral equations. For more details, see, e.g.,
\cite{text}. Consider the Fredholm equation of the second kind
\be f(x)-\gamma\int_{a}^b K(x,y)f(y)dy=g(x)\, , \ee
where the kernel $K(x,y)$ is a symmetric, bounded and continuous
function.

Defining operations with kernel $K(x,y)$ similarly to the usual
matrix operations:
\be K^n(x,y)=\int_a^bK(x,z)K^{n-1}(z,y)\ dz\, , \;\; \mbox{ with }
K^1(x,y)=K(x,y) \, , \ee
and
\be \mbox{Tr} K=\int_a^b K(x,x)\ dx\, , \;\; \mbox{Tr}
K^2=\int_a^b\int_a^b K(x,y)K(y,x)\ dx dy\, , \;\; \mbox{ and so on,}
\ee
we have the formulae that are useful for calculation of the Fredholm
determinant of the integral operator $1-\gamma\hat K$:
\be (1-\gamma\hat K)^{-1}=1+\gamma K^1+\gamma^2 K^2+\cdots \, , \ee
and
\be \ln \det(1-\gamma\hat
K)=-\sum_{n=1}^\infty\frac{\gamma^n}{n}\mbox{Tr} K^n \, . \label{a3}
\ee
Indeed, writing (\ref{a3}) as
\be \det(1-\gamma\hat K)= \prod_{n=1}^\infty \exp \{
-\frac{\gamma^n}{n}\mbox{Tr }K^n \} \, , \label{a4} \ee
and collecting terms of the same order in $\gamma$ one can see that
the determinant can be written conveniently as
\be \det\left(1-\gamma\hat K\right)=\sum_{n=0}^\infty(-1)^n
\frac{\gamma^n}{n!}\int_a^b dx_1\cdots\int_a^b dx_n K_n \left(
\begin{array}{c} x_1,\cdots,x_n \\x_1,\cdots,x_n \end{array}
\right) ,  \label{a5} \ee
where
\be \label{not} K_n \left(\begin{array}{c}
            x_1,\cdots,x_n\\
            y_1,\cdots,y_n
\end{array}\right) \equiv \det_{1\le j,k\le n} \left[ K(x_j,y_k)
\right]  .  \ee

The resolvent kernel $R(x,y)$ associated with the kernel $K(x,y)$ is
defined as $\hat R = (1-\gamma \hat K)^{-1} \hat K$, i.e.,
\be R(x,y)-\gamma\int_a^b\ K(x,z)R(z,y)dz=K(x,y)\, . \ee
If one introduces the determinants $R_n$ of kernels $R$ similarly to
(\ref{not}), an important relation can be proven to exist between
$R_n$ and the $r$-th Fredholm minor defined as a natural
generalization of Eq.~(\ref{a5}):
\be \label{Fredmin} \det\left(1-\gamma\hat K\left|\begin{array}{ccc}
y_1,&\cdots,&y_r \\
y_1',&\cdots,&y_r'
\end{array} \right.\right) =
\sum_{n=0}^\infty(-1)^{n}\frac{\gamma^{n}}{n!}\int_a^b dx_1\cdots
\int_a^b dx_n\ K_{n+r}\left(\begin{array}{cccccc}
y_1,&\cdots,&y_r,&x_1,&\cdots,&x_n\\
y_1',&\cdots,&y_r',&x_1,&\cdots,&x_n
\end{array}\right)\,.
\ee
The relation is \cite{H}
\be \label{RK} \det\left(1-\gamma\hat K\left|\begin{array}{ccc}
y_1,&\cdots,&y_r\\ y_1',&\cdots,&y_r' \end{array}\right. \right)
=\det\left(1-\gamma\hat K\right) R_n \left(\begin{array}{c}
y_1,\cdots,y_r \\ y_1',\cdots,y_r'
\end{array}\right) . \ee

\section{Anyonic Correlators}\label{ac}

In this Appendix, we prove Eq.~(\ref{eq100}) following the approach
used in \cite{AN} for calculation of the anyonic matrix elements. We
start with the simple case of the correlator
\be \la\Psi_2|\fad(x')\fa(x)|\Psi_2\ra\, , \ee
where we have omitted the quantum numbers $\{\lam\}$  unimportant
for the present computation. From (\ref{eq4e}) and (\ref{eq5e}), we
have
\be\label{B2} \la\Psi_2|\fad(x')\fa(x)|\Psi_2\ra=\frac{1}{2}\int
dz^2dy^2\ \chi_2^{*a}(y_1,y_2)\chi_2^a(z_1,z_2) \la
0|\fa(y_1)\fa(y_2)\fad(x')\fa(x)\fad(z_2)\fad(z_1)|0\ra\, . \ee
Defining for the moment
\be A=\la 0|\fa(y_1)\fa(y_2)\fad(x')\fa(x)\fad(z_2)\fad(z_1)|0\ra \,
, \ee
and using the commutation relation (\ref{com1}), $\fa(x)|0\ra=0$,
and $\la0|0\ra=1$ we obtain
\begin{eqnarray}\label{B4}
A&=&\la 0|\fa(y_1)\fa(y_2)\fad(x')\left[\fad(z_2)\fa(x)e^{-i\pi\kappa
\epsilon(x-z_2)}+\delta(x-z_2)\right]\fad(z_1)|0\ra\, ,\nonumber\\
&=&\la 0|\fa(y_1)\fa(y_2)\fad(x')\fad(z_2)\fa(x)\fad(z_1)|0\ra
e^{-i\pi\kappa\epsilon(x-z_2)} +\la0|\fa(y_1)\fa(y_2)\fad(x')\fad(z_1)
|0 \ra\delta(x-z_2)\, ,\nonumber\\
&=&\underbrace{\la
0|\fa(y_1)\fa(y_2)\fad(x')\fad(z_2)|0\ra}_{{\bf(a)}}
\delta(x-z_1)e^{-i\pi\kappa\epsilon(x-z_2)}
+\underbrace{\la0|\fa(y_1)\fa(y_2)\fad(x')\fad(z_1)|0 \ra}_{{ \bf
(b)}}\delta(x-z_2)\, .
\end{eqnarray}
Performing similar transformations we find that
\begin{eqnarray}\label{B5}
{\bf a}&=&\delta(y_1-x')\delta(y_2-z_2)e^{-i\pi\kappa\epsilon(y_2-x')}
+\delta(y_2-x')\delta(y_1-z_2)\, ,\nonumber\\
{\bf b}&=&\delta(y_1-x')\delta(y_2-z_1)e^{-i\pi\kappa\epsilon(y_2-x')}
+\delta(y_1-z_1)\delta(y_2-x')\, .
\end{eqnarray}
Substitution of (\ref{B4}) and (\ref{B5}) into (\ref{B2}) gives
\begin{eqnarray}
\la\Psi_2|\fad(x')\fa(x)|\Psi_2\ra &=&\frac{1}{2}\int
dz_1\left\{\chi_2^{*a}(x',z_1)\chi_2^a(x,z_1)
e^{-i\pi\kappa[\epsilon(z_1-x')+\epsilon(x-z_1)]}+\chi_2^{*a}(z_1,x')
\chi_2^a(x,z_1)e^{-i\pi\kappa\epsilon(x-z_1)}\right.\nonumber\\
& &\ \ \ \ \ \left.\chi_2^{*a}(x',z_1)\chi_2^a(z_1,x)e^{-i\pi\kappa
\epsilon(z_1-x')}+\chi_2^{*a}(z_1,x')\chi_2^a(z_1,x)\right\}\, .
\label{B6} \end{eqnarray}
Anyonic property of the wavefunctions (\ref{asdf}) together with its
complex conjugate
\be \chi_N^{*a}(z_1,\cdots,z_i,z_{i+1},\cdots,z_N)=e^{-i\pi\kappa
\epsilon(z_i-z_{i+1})} \chi_N^{*a}(z_1,\cdots,z_{i+1},z_{i},
\cdots,z_N)\, , \ee
means that (\ref{B6}) reduces to a simple form
\be \la\Psi_2|\fad(x')\fa (x)|\Psi_2\ra=2\int dz_1\
\chi_2^{*a}(z_1,x')\chi_2^a(z_1,x)\, . \ee
The generalization to the N-particle eigenstate is straightforward
and gives
\be \la\Psi_N|\fad(x')\fa(x)|\Psi_N\ra=N\int dz^{N-1}\
\chi_N^{*a}(z_1,\cdots,z_{N-1},x')\chi_N^a(z_1,\cdots,z_{N-1},x)\, .
\ee

\section{Reduced Density Matrices of Free Fermions}\label{rdmff}

The reduced density matrices for free 1D fermions were calculated in
the original paper of Lenard \cite{L1}. To make our discussion
self-contained, we provide here a sketch of the proof and the main
results in the notations that in general allow for an external
potential $U(z)$ acting on the particles.

We assume that the fermions are confined to the domain
$V=[-L/2,L/2]$, and have a complete set $\{ u_\lambda(z) \}$ of
normalized single-particle wavefunctions with energies
$\epsilon_{\lambda}$ in the potential $U(z)$. For instance, for
$U(z) \equiv 0$, and the "hard wall" boundary conditions at the
boundaries of the domain $V$,
\be u_\lambda(z)=\left\{\begin{array}{c}
\sqrt{\frac{2}{L}}\sin(\lambda z),\ \ \lambda=\frac{2\pi}{L},
\frac{4\pi}{L},\cdots\, ,\\ \sqrt{\frac{2}{L}}\cos(\lambda z),\ \
\lambda=\frac{\pi}{L},\frac{3\pi}{L},\cdots\, ,
\end{array}\right.  \ee
and, with appropriate conventions, $\epsilon_{\lambda}= \lambda^2$.
The $N$-body wavefunction of a stationary state is given by the
Slater determinant
\be \chi_N^f(z_1,\cdots,z_N|\{\lambda\})= \frac{1}{\sqrt{N!}}
\sum_{\pi\in S_N}(-1)^\pi\prod_{i=1}^N u_{\lambda_i}(z_{\pi(i)}) \,
, \ee
where the set $\{\lambda\}$ consists of non-coincident
single-particle states $\lambda_i$, and the energy eigenvalue is
$E(\{\lambda\})=\sum_{i=1}^N \epsilon_{\lambda_i}$. In the grand
canonical ensemble, the Gibbs measure is
\be p_{\{\lambda\}}^N=e^{hN/T}\frac{e^{-E(\{\lambda\})/T}} {Z(h,L,T)}\,
, \;\; \mbox{ with } \;\; Z(h,L,T)= \sum_{N= 0}^\infty
\sum_{\{\lambda\}}e^{hN/T}e^{-E(\{\lambda\})/T}, \label{c3} \ee
where $h$ is the chemical potential. Using the fact that, with an
extra factor $1/N!$ included to compensate for overcounting,
summation over $\{\lambda\}$ can be replaced with summation over
independent individual $\lambda_i$'s, one obtains the following
fundamental formula
\begin{eqnarray}\label{fun}
\sum_{\{\lambda\}}e^{hN/T} e^{-E(\{\lambda\})/T}
\chi_N^{*f}(z_1,\cdots,z_N|\{\lambda\})\chi_N^{f}(z_1',\cdots,z_N'|
\{\lambda\}) &=&\frac{1}{N!}\sum_{\pi\in S_N}(-1)^\pi\prod_{i=1}^N
F(z_i,z_{\pi(i)}') \\
&=&\frac{1}{N!}F_N\left(\begin{array}{c} z_1,\cdots,z_N \\
z_1',\cdots,z_N' \end{array}\right) .
\end{eqnarray}
Here we have used (\ref{not}) and
\be F(x,y)\equiv e^{h/T}\sum_\lambda
e^{-\epsilon_\lambda/T}u_\lambda^*(x)u_{\lambda}(y)\, .  \label{c4}
\ee
Equation (\ref{fun}) and proper normalization of the wavefunctions
$u_{\lambda}$ show that the fermionic statistical sum (\ref{c3}) can
be expressed as the determinant (\ref{a5}) of the integral operator
with kernel (\ref{c4}):
\be Z(h,L,T)=\sum_{N=0}^\infty\frac{1}{N!}\int_V dz_1 \cdots
\int_V dz_N\  F_N\left(\begin{array}{c} z_1,\cdots,z_N \\
z_1,\cdots,z_N
\end{array}\right) = \det(1+\hat F). \ee
Similarly, using the definition (\ref{Fredmin}) of Fredholm minor of
the same operator we see that
\begin{equation*}
\sum_{N=n}^\infty e^{hN/T}\sum_{\{\lambda\}} e^{-E(\{\lambda\})/T}
\frac{N!}{(N-n)!} \int_V dz_{1}\cdots \int_V dz_{N-n}\
\end{equation*}
\be \label{int} \times
\chi_N^{*f}(z_1,\cdots,z_{N-n},x_1,\cdots,x_n|\{\lambda\})
\chi_N^{f}(z_1,\cdots,z_{N-n},x_1',\cdots,x_n'|\{\lambda\}) = \det
\left( 1+\hat F\left|\begin{array}{c} x_1,\cdots,x_n\\
x_1',\cdots,x_n'
\end{array}\right.\right) , \ee
so that the reduced density matrix of the fermions can be expressed
as:
\be (x_1,\cdots,x_n|\rho_n^f|x_1',\cdots,x_n')=\det\left(1+\hat
F\left|\begin{array}{c} x_1,\cdots,x_n\\ x_1',\cdots,x_n'
\end{array}\right.\right)/\det(1+\hat F) .
\ee
The relation (\ref{RK}) for Fredholm minors means that this result
can be expressed simply in terms of the resolvent kernel
$\theta_T(x,y)/\pi $ associated with kernel $F(x,y)$ (\ref{c4})
(factors of $\pi$ are chosen so that the notations are the same as
in the main text):
\be (x_1,\cdots,x_n|\rho_n^f|x_1',\cdots,x_n')= \frac{1}{\pi^n}
\theta_T\left(\begin{array}{c} x_1,\cdots,x_n\\
x_1',\cdots,x_n'
\end{array}\right) . \ee
In the thermodynamic limit with no external potential, $U(z) \equiv
0$, the resolvent kernel $\theta_T$ is given by
\be \lim_{L\rightarrow\infty} \theta_T(x,y)= \frac{1}{2}
\int_{-\infty}^\infty dk\frac{e^{ik(x-y)}}{1+e^{(k^2-h)/T}}\, . \ee
%




\begin{thebibliography}{99}

\bibitem{L1}     A. Lenard: J. Math. Phys. {\bf 7} (1966) 1268.
\bibitem{LMP}    A. Liguori, M. Mintchev and L. Pilo: Nucl. Phys. {\bf B 569}
                (2000), 577.
\bibitem{IT}     N.Ilieva and W. Thirring: Eur.\ Phys.\ J. {\bf C 6}
                (1999), 705; [{\tt hep-th/9808103}]; Theor.\ Mat.\ Phys. {\bf 121} (1999), 1294; [{\tt math-ph/9906020}].
\bibitem{Kundu}  A. Kundu: Phys.\ Rev.\ Lett. {\bf 83}  (1999) 1275; [{\tt hep-th/9811247}].
\bibitem{G2}     M. Girardeau: Phys. Rev. Lett. {\bf 97} (2006) 100402; [{\tt cond-mat/0604357 }].
\bibitem{AN}     D.V. Averin and J.A. Nesteroff: Phys.\ Rev.\ Lett. {\bf 99} (2007); [{\tt arXiv:0704.0439}].
\bibitem{CM}     P.Calabrese and M. Mintchev: Phys. Rev. {\bf B 75} (2007) 233104; [{\tt
                 cond-mat/0703117}].
\bibitem{PKA}    O.I. P\^{a}\c{t}u, V.E. Korepin and D.V. Averin: J. Phys. {\bf A 40} (2007) 14963; [{\tt arXiv:0707.4520}].
\bibitem{LL}     E.H. Lieb and W. Liniger: Phys.\ Rev. {\bf 130} (1963),
                 1605.
\bibitem{G}      M. Girardeau: J. Math. Phys. {\bf 1} (1960) 516.
\bibitem{BGO}    M.T. Batchelor, X.-W. Guan and N. Oelkers: Phys.\ Rev.\ Lett. {\bf 96}
                 (2006), 210402; [{\tt cond-mat/0603643}].
\bibitem{BG}     M.T. Batchelor and X.-W. Guan: Phys.\ Rev. {\bf  B 74} (2006), 195121;
                 [{\tt cond-mat/0606353}].
\bibitem{BGH}    M.T. Batchelor, X.-W. Guan and J.-S. He: J.\ Stat.\ Mech. (2007)
                 P03007; [{\tt cond-mat/0611450}].
\bibitem{SSC}    R. Santachiara, F. Stauffer and D.C. Cabra: J. Stat. Mech. (2006) L06002; [{\tt cond-mat/0610402}].
\bibitem{KBI}    V.E. Korepin, N.M. Bogoliubov, and A.G. Izergin, {\it Quantum
                     Inverse Scattering Method and Correlation Functions}, Cambridge
                     Univ.\ Press, 1993.
\bibitem{L}      A. Lenard: J. Math. Phys. {\bf 5} (1964) 930.
\bibitem{S}      T.D. Schultz: J. Math. Phys. {\bf 4} (1963) 666.
\bibitem{text}   F.G. Tricomi, {\it Integral equations}, Dover,
1985, Ch.~2.
\bibitem{H}      W.A. Hurwitz, Bull.\ Am.\ Math. Soc. {\bf 20}, 406
(1914); available at: http://projecteuclid.org.
\end{thebibliography}
\end{document}